\newcommand{\vsu}{\vspace{+.1cm} }
\newcommand{\vsd}{\vspace{+.2cm} }
\newtheorem{theorem}{Theorem}[section]
\newtheorem{defin}[theorem]{Definition}
\newtheorem{lemma}[theorem]{Lemma}
\newtheorem{definition}[theorem]{Definition}
\newtheorem{remark}[theorem]{Remark}
\newtheorem{proposition}[theorem]{Proposition}
\newenvironment{proof}{\par\noindent {\bf Proof.} \rm}{\ ~~~$\fbox{}$}
\newcommand{\N}{\mbox{\rm I$\!$N}}
\newcommand{\Z}{\mbox{\rm \lower0.3pt\hbox{$\angle\!\!\!$}Z}}
\newcommand{\sub}[2]{#1{[#2]}}
\newcommand{\card}[1]{{{\rm Card}\left(#1\right)}}
\newcommand{\neww}[1]{\omega}
\newcommand{\newx}[1]{\chi}
\newcommand{\newy}[1]{\gamma}
\newcommand{\newp}[1]{\varrho}
\newcommand{\news}[1]{\varsigma}
\newcommand{\factors}[1]{{\rm Fcts} \left( #1 \right)}
\title{$3$-anti-power uniform morphisms}
\author{Francis Wlazinski
}
\begin{document}

\maketitle

\parindent=0cm
\parskip=0.15cm

 
 

\newcounter{comptnivun}
\setcounter{comptnivun}{1}
\newcounter{comptnivdeux}
\setcounter{comptnivdeux}{1}
\newcounter{comptnivtrois}
\setcounter{comptnivtrois}{1}
\newcounter{comptnivquatre}
\setcounter{comptnivquatre}{1}

\bibliographystyle{plain}


\begin{abstract}

Words whose every three successive factors of the same length are all different i.e. 
$3$-anti-power words are a natural extension of square-free words (every two successive factors of the same length are different).
We give a way to verify whether a uniform morphism preserves $3$-anti-power words (the image of a 
$3$-anti-power word is a $3$-anti-power word).

A consequence of the existence of such morphisms is the possibility of generating infinite $3$-anti-power words.

\end{abstract}

\section{\label{sectionPreliminaries} 
         Preliminaries}

Let us recall some basic notions of Combinatorics of words.

\subsection{\label{sectionWords}Words}

An \textit{alphabet} $A$ is a finite set of symbols called \textit{letters}.
A \textit{word} over $A$ is a finite sequence of letters
from $A$.
The \textit{empty word} $\varepsilon$ is the empty sequence of letters.
Equipped with the concatenation operation, the set $A^*$ of words over $A$
is a free monoid with $\varepsilon$ as neutral element and $A$ as
set of generators.
Since an alphabet with one element is limited interest to us, we always assume that the cardinality of considered alphabets is at least two.
Given a non-empty
word $u = a_1\ldots a_n$ with $a_i \in A$ for any integer $i$ from 1 to $n$, the \textit{length}
of $u$ denoted by $|u|$ is the integer $n$
that is the number of letters of $u$.
By convention, we have $|\varepsilon| = 0$.
We denote by $A^+$ the set of words of positive length over $A$, i.e.,
$A^+=A^* \setminus \{\varepsilon\}$.

An infinite word over $A$ is a map from $\N$ to $A$ that is an infinite
sequence of letters $a_1\ldots a_n\ldots$ with $a_i \in A$.
And $A^{\N}$ is the set of all infinite words over $A$.


A word $u$ is a \textit{factor} of a word $v$ if there exist
two (possibly empty) words $p$ and $s$ such that $v = p u s$.
We denote $\factors{v}$ the set of all factors of $v$.
If $u\in\factors{v}$, we also say that $v$ \textit{contains} the word $u$ (as a factor).
If $p = \varepsilon$, $u$ is a \textit{prefix} of $v$.
If $s = \varepsilon$, $u$ is a \textit{suffix} of $v$.
If $u \neq v$, $u$ is a \textit{proper} factor of $v$.
If $u$, $p$ and $s$ are non-empty, $u$ is an \textit{internal} factor of $v$.


Let $w$ be a non-empty word and let $i, j$ be two integers such that
$0 \leq i-1 \leq j \leq |w|$.
We denote by $\sub{w}{i..j}$ the factor of $w$ such that $|\sub{w}{i..j}|=j-i+1$ 
and $w = p \sub{w}{i..j} s$ for two words $s$ and $p$ verifying $|p| = i-1$.
When $j>i$, $\sub{w}{i..j}$ is simply the factor of $w$ that starts at 
the $i^{\rm th}$ letter of $w$ and ends at the $j^{\rm th}$.
Note that, when $j = i - 1$, we have $\sub{w}{i..j} = \varepsilon$.
When $i=j$, we also denote by $\sub{w}{i}$ the factor $\sub{w}{i..i}$ which is the
$i^{\rm th}$ letter of $w$.
In particular, $\sub{w}{1}$ and $\sub{w}{|w|}$ are respectively the first and the last
letter of $w$.

Powers of a word are defined inductively by $u^0 = \varepsilon$, and 
for any integer $n \geq 1$, $u^n = u u^{n-1}$.
Given an integer $k \geq 2$, since the case $\varepsilon^k$ is of little interest, 
we call a \textit{$k$-power}
any word $u^k$ with $u\neq \varepsilon$.

Given an integer $k \geq 2$, a word is \textit{$k$-power-free} if it does not contain any 
$k$-power as factor. A \textit{primitive} word is a word which is not a $k$-power of 
another word whatever the integer $k \geq 2$.

Given two integers $p>q>1$ and two words $x$ and $y$ with $xy\neq \varepsilon$, 
a word of the form $(xy)^{\alpha}x$ with $\alpha + \frac{|x|}{|xy|} = \frac{p}{q}$
is called \textit{a $\frac{p}{q}$-power}. For instance, 
the word anchorman is a $(1+\frac{2}{7}=)\frac{9}{7}$-power
and the word $abaabaa$ is a $(2+\frac{1}{3}=)\frac{7}{3}$-power.
In particular, a $\frac{3}{2}$-power is a word of the form $xyx$ with $|x|=|y| \neq \varepsilon$.
For instance, the word antman is a $\frac{3}{2}$-power.
A word is \textit{$\frac{p}{q}$-power-free} if it does not contain any 
$\ell$-power as factor with $\ell \geq \frac{p}{q}$.
The word $abcaba$ is not $\frac{3}{2}$-power-free. Indeed, it 
contains the word $abc \, ab$ which is a $\frac{5}{3}$-power.

Given an integer $k \geq 2$ and an integer $n \geq 1$, 
a \textit{$(k,n)$-anti-power sequence} or simply a 
\textit{$k$-anti-power}~\cite{Fic2019} is a concatenation of $k$ 
consecutive pairwise different words of the same length $n$.

For instance, if $A=\{a,b\}$, the words $u=aa\,ba\,bb\, ab$ and
$v=aba \,bab \, abb \, aaa$ are respectively $(4,2)$-anti-power 
and $(4,3)$-anti-power sequences.
But the prefixe $ab ab ab ab $ of $v$ is not a $4$-anti-power sequence:
it is even a $4$-power.

Given an integer $k \geq 1$, if $\card A = \alpha \geq 2$ then there exit
$\alpha^n$ different words in $A^*$ of length $n \geq 1$.
Among the words of length $k \times n$, there are $\alpha^n$ different
$k$-powers (of length $k \times n$) and $A^k_{\alpha^n}=\dfrac{(\alpha^n)\,!}{(\alpha^n-k)\,!}$ 
different $(k,n)$-anti-power sequences if  $\alpha^n  \geq k$ and $0$ otherwise.
It particulary means that there exists an integer $k_0$
such that there are no $(k',n)$-anti-power sequence over $A$ for any $k' \geq k_0$.

For any alphabet $A$ with $\card A = \alpha \geq 2$ 
and for any integer $k \geq 2$, there exists a smallest integer $p_0$
such that $\alpha^{p_0}  \geq k$. And, if $p \geq p_0$, the set of 
$(k,p)$-anti-power sequences is greater than the set of $k$-powers
of length $p \times k$.

A \textit{$2$-anti-power word} is simply a square-free word.
Given an integer $k \geq 3$, a word $w$ is a \textit{$k$-anti-power word} if it is
a $(k-1)$-anti-power word and if any factor
of $w$ of length $k \times \ell$ for every $1 \leq \ell \leq \left\lfloor\dfrac{|w|}{k} \right\rfloor$ is a $(k,\ell)$-anti-power sequence.
By this definition, a word of length $n$ with $2 \leq n <k$
is a $k$-anti-power word if and only if it is a $n$-anti-power word.


An \textit{infinite $k$-anti-power word} is an infinite word whose 
finite factors are all $k$-anti-power words.
Obviously, the first question is whether such a word exists.

\vsu

If $A=\{a,b\}$, the only $2$-anti-power words are $aba$, $bab$ and their factors.
But, for any $k \geq 3$, the only $k$-anti-power words are $a$, $b$, $ab$ and $ba$.

If $\card A \geq 3$, there exist infinite $2$-anti-power (square-free) words
\cite{Ber1995,Thu1906,Thu1912}.

If $k=3$ and $A=\{a,b,c\}$, the only $3$-anti-power words are $abcab$, the exchange 
of letters of this word and their factors.
Let us note that the word $abcab$ is not $\frac{3}{2}$-power-free.

A $\frac{3}{2}$-power-free word contain neither a factor of the form
$xyx$ with $|x|=|y|$, nor a factor of the form $xx$.
Thus a $\frac{3}{2}$-power-free word is a $3$-anti-power word (but the converse does not hold). 
Thus a Dejean's word~\cite{Dej1972,CurRam2011,RAO2011} over a four-letter alphabet,
which does not contain any $\ell$-power with $\ell>\frac{7}{5}$-power-free, is a $3$-anti-power word.

More generally, a non-$k$-anti-power word (among $k$ consecutive factors of the same length of 
this word, at least two of them are equal) contains at least one fractionnal $\ell$-power
with $\ell \geq \frac{k}{k-1}$. Therefore, when $k \geq 3$, a Dejean's word over a 
$(k+1)$-letter alphabet is a $k$-anti-power word.

\begin{remark}

If we had chosen not to add that a $k$-anti-power word
must be a $(k-1)$-anti-power word, we would have, for instance, that,
for $A=\{a;b;c\}$, the word $abcabcab$ would have been a $3$-anti-power word
but not a $2$-anti-power word.

More precisely, without the condition that a $k$-anti-power word $w$
must be a $(k-1)$-anti-power word, we only could say that all prefixes and all
suffixes of $w$ of length between $k-1$ and $\left\lfloor \dfrac{(k-1)|w|}{k} \right\rfloor$
are $(k-1)$-anti-power words.

For an infinite word, it does not change anything to add the condition that a $k$-anti-power 
word $w$ must be a $(k-1)$-anti-power word. Indeed, 
every factor of $w$ whose length is a multiple of $k-1$  can be extended to a factor 
whose length is a multiple of $k$.
Obviously, if these $k$ factors are different, the same holds for $k-1$ ones.

\end{remark}




\begin{lemma}\cite{Ker1986,Lec1985}\label{factint}
If a non-empty word $v$ is an internal factor of $vv$,
i.e., if there exist two non-empty words $x$ and $y$ such that $vv=xvy$,
then there exist a non-empty word $t$ and two integers $i,j \geq 1$
such that $x=t^i$, $y=t^j$, and $v=t^{i+j}$.
\end{lemma}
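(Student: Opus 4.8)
The plan is to reduce the statement to the classical theorem of Fine and Wilf on periods. Throughout, write $p=|x|$ and $q=|y|$. Comparing lengths in $vv=xvy$ gives $|v|=p+q$, and since $x$ and $y$ are non-empty we have $1\le p,q\le |v|-1$. Moreover $x$ is the prefix of $v$ of length $p$ and $y$ is its suffix of length $q$: both are shorter than $v$, so they are simply read off the two ends of $vv=xvy$, whose extreme letters coincide with those of $v$.

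First I would extract periodicity from the single equation expressing the internal occurrence. The identity $vv=xvy$ says that the factor of $vv$ occupying positions $p+1,\dots,p+|v|$ is exactly $v$, i.e. $v[k]=(vv)[p+k]$ for every $1\le k\le |v|$. Splitting on whether $p+k\le |v|$ or not, and using that $(vv)[m]=v[m]$ for $m\le|v|$ while $(vv)[m]=v[m-|v|]$ for $m>|v|$, yields at once $v[k]=v[k+p]$ for $1\le k\le |v|-p$ and $v[m]=v[m+q]$ for $1\le m\le |v|-q$. In other words, $v$ admits both $p$ and $q$ as periods, and both facts fall out of the same occurrence relation.

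Next I would invoke the Fine--Wilf theorem: a word having two periods $p$ and $q$ and length at least $p+q-\gcd(p,q)$ also has period $g=\gcd(p,q)$. Here the hypothesis is (comfortably) satisfied, since $|v|=p+q\ge p+q-\gcd(p,q)$. Hence $v$ has period $g$, and because $g$ divides both $p$ and $q$ it divides $|v|=p+q$ as well.

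Finally, setting $t$ to be the prefix of $v$ of length $g$ (a non-empty word), periodicity with a period dividing the length gives $v=t^{|v|/g}$. Since $p$ and $q$ are multiples of $g$, the prefix $x$ of length $p$ and the suffix $y$ of length $q$ are correspondingly $x=t^{p/g}$ and $y=t^{q/g}$; taking $i=p/g\ge 1$ and $j=q/g\ge 1$ then gives $x=t^{i}$, $y=t^{j}$ and $v=t^{i+j}$, as required. The only delicate point is checking that the Fine--Wilf length condition holds, which is immediate here because the equality $|v|=p+q$ already exceeds the threshold $p+q-\gcd(p,q)$. If one prefers to avoid quoting Fine--Wilf, the same conclusion follows from a Euclidean-type descent applied to the conjugacy equation $x'v=vy$ obtained by writing $v=xx'$ and cancelling the common prefix $x$ in $vv=xvy$.
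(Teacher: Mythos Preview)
Your argument is correct: extracting the two periods $p=|x|$ and $q=|y|$ from the internal occurrence, invoking Fine--Wilf (the length condition $|v|=p+q\ge p+q-\gcd(p,q)$ is indeed automatic here), and then reading off $t$ as the prefix of length $\gcd(p,q)$ is a clean and standard route to the conclusion.

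As for comparison with the paper: the paper does not actually prove this lemma. It is stated with a citation to Ker\"anen and Leconte and used as a black box later (in the proof of Proposition~\ref{pro2}). So there is no ``paper's own proof'' to compare against; your Fine--Wilf argument is one of the standard ways this classical fact is established, and the conjugacy/Euclidean descent you mention at the end is essentially the other standard route (and is closer in spirit to how it is often presented in the combinatorics-on-words literature, e.g.\ via the Levi lemma applied iteratively to $xv=vy'$).
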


\subsection{\label{sectionMorphisms}Morphisms}

Let $A$ and $B$ be two alphabets.
A \textit{morphism} $f$ from $A^*$ to $B^*$ is a mapping
from $A^*$ to $B^*$ such that $f(uv) = f(u)f(v)$ for all words $u, v$ over $A$.
When $B$ has no importance, we say that $f$ is a morphism on
$A$ or that $f$ is defined on $A$. 

Given an integer $L \geq 1$, $f$ is \textit{$L$-uniform}
if $|f(a)| = L$ for every letter $a$ in $A$.
A morphism $f$ is \textit{uniform} if it is $L$-uniform for some integer $L \geq 1$.

Let $k \geq 2$ be an integer and Let $A$ and $B$ be two alphabets.
A morphism $f$  from $A^*$ to $B^*$  is \textit{$k$-anti-power}  if and only
if $f(w)$ is a $k$-anti-power word over B for every $k$-anti-power word $w$ over $A$
For instance, the \textit{identity endomorphism} $Id$
($\forall a \in A$, $Id(a) = a$) 
is a $k$-anti-power morphism.
In particular, a $2$-anti-power morphism is a square-free morphism.
%
%
These last morphisms have been characterized in~\cite{Cro1982}.

We say that a morphism is \textit{non-erasing} if, for all letters $a \in A$,
$f(a) \neq \varepsilon$.
A $k$-anti-power morphism, as every square-free morphism, is necessarily non-erasing.


A morphism on $A$ is called \textit{prefix} (resp. \textit{suffix})
if, for all different letters $a$ and $b$ in $A$, 
the word $f(a)$ is not a prefix (resp. not a suffix) of $f(b)$.
A prefix (resp. suffix) morphism is non-erasing.
A morphism is \textit{bifix} if it is prefix and suffix.


Proofs of the two following lemmas are left to the reader.


\begin{lemma}
\label{pref}
Let $f$ be a prefix morphism on an alphabet $A$,
let $u$ and $v$ be words over $A$,
and let $a$ and $b$ be letters in $A$.
Furthermore,
let $p_1$ (resp. $p_2$) be a prefix of $f(a)$ (resp. of $f(b)$).
If $(p_1;p_2) \neq (\varepsilon;f(b))$ and if $(p_1;p_2) \neq (f(a);\varepsilon)$ 
then
the equality $f(u)p_1=f(v)p_2$ implies $u=v$ and $p_1=p_2$.
\end{lemma}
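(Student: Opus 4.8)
The plan is to prove the statement by induction on $|u|+|v|$, treating the equation $f(u)p_1 = f(v)p_2$ by comparing the two sides letter by letter from the left. The crucial observation is that the prefix hypothesis forbids one image $f(a)$ from being a prefix of a different image $f(b)$, which is exactly what lets us peel off matching initial letters of $u$ and $v$ one at a time.

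\textbf{Base case.} First I would handle the situation where $u = \varepsilon$ (or symmetrically $v = \varepsilon$). If $u = \varepsilon$, the equation reads $p_1 = f(v)p_2$. Since $p_1$ is a prefix of $f(a)$ and hence $|p_1| \leq |f(a)| = L$, while $f(v)p_2$ has length at least $|f(v)| = L|v|$, a length comparison forces either $v = \varepsilon$ or, if $|v| = 1$ with $v = b$, the word $f(b)$ to be a prefix of $p_1$, hence of $f(a)$. I would then invoke the excluded degenerate pairs $(p_1;p_2)\neq(\varepsilon;f(b))$ and $(p_1;p_2)\neq(f(a);\varepsilon)$ to rule out precisely the cases where $u,v$ differ by one letter with one image being a prefix of the other, concluding $u=v=\varepsilon$ and $p_1=p_2$.

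\textbf{Inductive step.} When both $u$ and $v$ are non-empty, write $u = a'u'$ and $v = b'v'$ with $a',b'$ letters. The equation becomes $f(a')f(u')p_1 = f(b')f(v')p_2$. Comparing the first $L$ letters, either $f(a')$ is a prefix of $f(b')$ or vice versa; since both images have the same length $L$, this forces $f(a') = f(b')$, and then the prefix property of $f$ gives $a' = b'$. Cancelling the common prefix $f(a')=f(b')$ on the left yields $f(u')p_1 = f(v')p_2$ with $|u'|+|v'| < |u|+|v|$, and the induction hypothesis gives $u'=v'$ and $p_1=p_2$, whence $u = v$.

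\textbf{Main obstacle.} The delicate point is the interface between the base case and the degenerate pairs: I must check carefully that the two exclusions $(p_1;p_2)\neq(\varepsilon;f(b))$ and $(p_1;p_2)\neq(f(a);\varepsilon)$ are exactly what is needed to eliminate the boundary configurations in which $u$ and $v$ have different lengths. The reason the hypothesis is stated asymmetrically in $a$ and $b$ is that these two forbidden pairs correspond to the two ways a length mismatch of exactly one letter could otherwise satisfy the equation (namely $f(u)=f(v)f(b)$ with $p_1=\varepsilon,p_2=f(b)$, or $f(u)f(a)=f(v)$ with $p_1=f(a),p_2=\varepsilon$), and ruling them out is what makes the length argument in the base case go through cleanly. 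Verifying that no other length discrepancy survives — i.e. that a difference of two or more letters is impossible once the prefix property has aligned the leading blocks — is the one place where I would slow down and argue explicitly rather than wave at routine cancellation.
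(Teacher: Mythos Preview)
The paper explicitly leaves the proof of this lemma to the reader, so there is no author's proof to compare against; your inductive strategy of peeling off leading letters using the prefix property is the standard one and is sound.

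One genuine oversight: you repeatedly invoke a common image length~$L$ (e.g.\ ``$|p_1|\le |f(a)|=L$'', ``comparing the first $L$ letters'', ``both images have the same length~$L$''), but the lemma is stated for an arbitrary \emph{prefix} morphism, not a uniform one. Fortunately the argument survives without lengths. In the inductive step, from $f(a')f(u')p_1=f(b')f(v')p_2$ one of $f(a')$, $f(b')$ is a prefix of the other (they are both prefixes of the same word); the prefix hypothesis then forces $a'=b'$ directly, with no appeal to equal lengths. In the base case $u=\varepsilon$, $v\neq\varepsilon$, write $v=c\,v'$: then $f(c)$ is a prefix of $p_1$, hence of $f(a)$, so $c=a$ by the prefix property; this gives $p_1=f(a)$, whence $f(v')p_2=\varepsilon$, and non-erasingness yields $v'=\varepsilon$, $p_2=\varepsilon$ --- exactly the excluded pair $(p_1;p_2)=(f(a);\varepsilon)$. (Note also a notational slip: the single letter of $v$ need not be the $b$ of the statement; it is forced to equal~$a$.) With these adjustments your plan goes through for general prefix morphisms as the lemma requires.
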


\begin{lemma}
\label{suff}
Let $f$ be a suffix morphism on an alphabet $A$,
let $u$ and $v$ be words over $A$, 
and let $a$ and $b$ be letters in $A$.
Furthermore,
let $s_1$ (resp. $s_2$) be a suffix of $f(a)$ (resp. of $f(b)$).
If $(s_1;s_2) \neq (\varepsilon;f(b))$ and if $(s_1;s_2) \neq (f(a);\varepsilon)$ 
then
the equality $s_1f(u)=s_2f(v)$ implies $u=v$ and $s_1=s_2$.

\end{lemma}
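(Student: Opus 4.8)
The plan is to deduce Lemma~\ref{suff} from Lemma~\ref{pref} by a left--right duality, so that no separate induction is needed. For a word $w = a_1 \cdots a_n$ over $A$ (with each $a_i \in A$) write $\widetilde{w} = a_n \cdots a_1$ for its \emph{reversal}, with $\widetilde{\varepsilon} = \varepsilon$; reversal is an involution satisfying $\widetilde{xy} = \widetilde{y}\,\widetilde{x}$, and $x$ is a suffix of $y$ if and only if $\widetilde{x}$ is a prefix of $\widetilde{y}$.

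First I would introduce the \emph{mirror morphism} $g$ defined on letters by $g(c) = \widetilde{f(c)}$ for every $c \in A$ and extended by concatenation. A straightforward computation on $w = a_1 \cdots a_n$ gives $\widetilde{f(w)} = \widetilde{f(a_n)} \cdots \widetilde{f(a_1)} = g(\widetilde{w})$, i.e. the identity $\widetilde{f(w)} = g(\widetilde{w})$ holds for every word $w$. Since $f$ is a suffix morphism, for any two distinct letters $c,d$ the word $f(c)$ is not a suffix of $f(d)$; reversing, $g(c) = \widetilde{f(c)}$ is not a prefix of $\widetilde{f(d)} = g(d)$, so $g$ is a prefix morphism and Lemma~\ref{pref} applies to it.

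Next I would reverse the whole hypothesis. Because $s_1$ (resp. $s_2$) is a suffix of $f(a)$ (resp. $f(b)$), the word $\widetilde{s_1}$ (resp. $\widetilde{s_2}$) is a prefix of $g(a) = \widetilde{f(a)}$ (resp. $g(b) = \widetilde{f(b)}$). Reversing the equation $s_1 f(u) = s_2 f(v)$ and using the identity above yields $g(\widetilde{u})\,\widetilde{s_1} = g(\widetilde{v})\,\widetilde{s_2}$. The two excluded cases translate directly: since $s_1 = \varepsilon \Leftrightarrow \widetilde{s_1} = \varepsilon$, $s_2 = f(b) \Leftrightarrow \widetilde{s_2} = g(b)$, $s_1 = f(a) \Leftrightarrow \widetilde{s_1} = g(a)$ and $s_2 = \varepsilon \Leftrightarrow \widetilde{s_2} = \varepsilon$, the hypotheses $(s_1;s_2) \neq (\varepsilon;f(b))$ and $(s_1;s_2) \neq (f(a);\varepsilon)$ become exactly $(\widetilde{s_1};\widetilde{s_2}) \neq (\varepsilon;g(b))$ and $(\widetilde{s_1};\widetilde{s_2}) \neq (g(a);\varepsilon)$.

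At this point Lemma~\ref{pref}, applied to the prefix morphism $g$, the words $\widetilde{u}, \widetilde{v}$, the letters $a,b$ and the prefixes $\widetilde{s_1}, \widetilde{s_2}$, gives $\widetilde{u} = \widetilde{v}$ and $\widetilde{s_1} = \widetilde{s_2}$; applying the involution once more returns $u = v$ and $s_1 = s_2$, as desired. I expect the only delicate point to be the bookkeeping of the two excluded pairs under reversal --- one must check that the exceptional configurations forbidden in Lemma~\ref{pref} correspond precisely to those forbidden here, rather than to some shifted pair. Alternatively, one could mimic the direct argument for Lemma~\ref{pref}: induct on $|u|+|v|$, compare the images of the last letters of $u$ and $v$ as suffixes of the common word, use the suffix property to force those letters equal, cancel, and invoke the excluded cases to rule out the degenerate situation where one of $u,v$ is empty while the other is not; the duality route simply packages this work into Lemma~\ref{pref}.
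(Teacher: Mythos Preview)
Your duality argument via reversal is correct: the mirror morphism $g$ is indeed a prefix morphism, the identity $\widetilde{f(w)}=g(\widetilde{w})$ transports the equation $s_1f(u)=s_2f(v)$ to $g(\widetilde{u})\,\widetilde{s_1}=g(\widetilde{v})\,\widetilde{s_2}$, and you have checked carefully that the two excluded pairs correspond exactly to the excluded pairs in Lemma~\ref{pref}. Applying Lemma~\ref{pref} and reversing back gives the conclusion.

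As for comparison with the paper: there is nothing to compare against, since the paper explicitly states that the proofs of Lemmas~\ref{pref} and~\ref{suff} are left to the reader. Your reduction to Lemma~\ref{pref} is a perfectly good way to supply the missing proof; the only caveat is that it shifts the actual work onto Lemma~\ref{pref}, which is likewise unproved in the paper. You anticipate this by sketching the direct inductive argument (compare last letters, use the suffix property, cancel, handle the base case via the excluded pairs), and that sketch is also correct. Either route is acceptable; the duality argument has the virtue of making the symmetry between the two lemmas explicit rather than repeating an essentially identical induction.
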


Taking $p_1=p_2=\varepsilon$ (resp. $s_1=s_2=\varepsilon$) in Lemma~\ref{pref}
(resp Lemma~\ref{suff}),
we get that a prefix (resp. suffix) morphism is injective.

\begin{definition}

A morphism $f$ from $A^*$ to $B^*$ is a \textit{ps-morphism} (Ker\"anen \cite{Ker1986} called 
$f$ a ps-code)
if and only if the equalities \\ \centerline{$f(a) = ps$, $f(b) = ps'$ and
$f(c) = p's$} with $a,b,c \in A$ (possibly $c = b$) and $p$, $s$, $p'$,
$s' \in B^*$ imply $b = a$ or $c = a$.

\end{definition}

Obviously, taking $c=b$, and $s=\varepsilon$ in a first time
and $p=\varepsilon$ in a second time, we obtain that
a ps-morphism is a bifix morphism.

\begin{lemma}{\rm \cite{Ker1986,Lec1985}}
\label{lemmeSPKPS}
If $f$ is not a ps-morphism then $f$ is not a $k$-power-free morphism for every integer $k \geq 2$.
\end{lemma}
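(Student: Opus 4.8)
The plan is to prove the contrapositive at the level of a single witness word: from a failure of the ps-property I will manufacture, for every $k \geq 2$ at once, a $k$-power-free word whose image contains an explicit $k$-power. First I would unwind the hypothesis. To say that $f$ is \emph{not} a ps-morphism means there exist letters $a,b,c \in A$ and words $p,s,p',s' \in B^*$ with
\[ f(a)=ps, \qquad f(b)=ps', \qquad f(c)=p's, \]
for which the conclusion $b=a$ or $c=a$ fails; that is, $b \neq a$ and $c \neq a$ (while possibly $c=b$). These two inequalities are exactly what I need to keep the repetitions in the preimage under control.

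Next I would introduce the witness $w = c\,a^{k-1}\,b$ and verify two facts. First, $w$ is $k$-power-free: its length is $k+1$, the only block of equal consecutive letters is the run $a^{k-1}$ (a $(k-1)$-power but not a $k$-power, since $c \neq a$ and $b \neq a$ prevent this run from extending), a one-letter $k$-power would require $k$ equal consecutive letters which do not occur, and any $k$-power of period at least $2$ would have length at least $2k > k+1$. Hence $w$ carries no $k$-power, even in the degenerate case $c=b$. Second, the image telescopes:
\[ f(w) = f(c)\,f(a)^{k-1}\,f(b) = (p's)(ps)^{k-1}(ps') = p'\,(sp)^k\,s', \]
using the concatenation identity $s\,(ps)^{k-1}\,p = (sp)^k$. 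Thus $(sp)^k$ occurs as a factor of $f(w)$.

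Finally I would conclude. For a non-erasing $f$ — which is the relevant setting, since a $k$-power-free morphism is necessarily non-erasing, just as a square-free one is — we have $f(a)=ps \neq \varepsilon$, whence $sp \neq \varepsilon$ and $(sp)^k$ is a genuine $k$-power. Therefore $f(w)$ is not $k$-power-free although $w$ is, so $f$ is not a $k$-power-free morphism; as $k \geq 2$ was arbitrary, this holds for every $k \geq 2$.

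The routine bookkeeping aside, the only delicate points are the telescoping identity $s\,(ps)^{k-1}\,p = (sp)^k$ and the check that $w$ remains $k$-power-free when $c=b$ (so that the run of $a$'s does not merge with a neighbouring letter). The hard part — and the reason non-erasingness enters — is ensuring $sp \neq \varepsilon$: this is precisely the condition that distinguishes a nontrivial factorization witnessing the ps-failure from the trivial one with $p=s=\varepsilon$ (available for any morphism erasing the letter $a$), which would produce only the empty word in place of the power.
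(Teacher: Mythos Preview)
The paper does not supply its own proof of this lemma; it is quoted from Ker\"anen~\cite{Ker1986} and Leconte~\cite{Lec1985} and used as a black box. So there is nothing in the paper to compare your argument against directly.

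That said, your argument is correct and is essentially the classical one from those references: the witness word $w = c\,a^{k-1}\,b$ and the telescoping identity
\[
f(c)f(a)^{k-1}f(b) = (p's)(ps)^{k-1}(ps') = p'\,(sp)^k\,s'
\]
are exactly how the result is established in the literature. Your verification that $w$ is $k$-power-free (the only possible $k$-power would have period~$1$, blocked by $c\neq a$ and $b\neq a$; period $\geq 2$ is ruled out by length) is clean and handles the case $c=b$ without extra work.

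One small remark on the non-erasing hypothesis. Your contrapositive framing (``if $f$ were $k$-power-free it would be non-erasing, hence $sp\neq\varepsilon$'') is the right way to organise it, but note that the literal claim ``every $k$-power-free morphism is non-erasing'' fails for the zero morphism. This is a harmless edge case---the paper makes the same tacit exclusion when it asserts that square-free morphisms are non-erasing, and in any case its morphisms are $L$-uniform with $L\geq 1$---but strictly speaking the lemma as stated requires that convention.
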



\begin{remark}
It means that a $2$-anti-power morphism is a ps-morphism.
\end{remark}

\begin{proposition}\label{pro2} 

Let $A$ and $B$ be two alphabets with $\card A \geq 2$
and let $f$ be a uniform morphism from $A^*$ to $B^*$.
If there exist five letters $a$, $b$, $c$, $d$ and $x$ (possibly equal)
and four words $p$, $s$, $\pi$ and $\sigma$ such that
$s$ is a suffix of $f(a)$, $p$ is a prefix of $f(b)$,
$\sigma$ is a non-empty suffix of $f(c)$, $\pi$ is a non-empty prefix of $f(d)$,
and $sp=\sigma f(x) \pi$ then $f$ is not a square-free morphism.

\end{proposition}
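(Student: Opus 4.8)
The plan is to exhibit a short square-free word whose image under $f$ contains a square; by the definition of a square-free (i.e.\ $2$-anti-power) morphism this shows $f$ is not square-free. Write $L$ for the uniformity constant, so $|f(a)|=|f(b)|=|f(c)|=|f(d)|=|f(x)|=L$, and put $s_0=|s|$, $p_0=|p|$, $\sigma_0=|\sigma|$, $\pi_0=|\pi|$. First I would extract length information: taking lengths in $sp=\sigma f(x)\pi$ and using $|f(x)|=L$ gives $s_0+p_0=\sigma_0+L+\pi_0$. Since $p_0\le L$ and $\pi_0\ge 1$, this yields $s_0=\sigma_0+\pi_0+(L-p_0)\ge \sigma_0+1$, so $s_0>\sigma_0$; together with $s_0\le L$ and $\sigma_0\ge 1$ the integer $j:=L-s_0+\sigma_0$ satisfies $1\le j\le L-1$. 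Note that only the non-emptiness of $\sigma$ and $\pi$ is used, not that they come from $f(c)$ and $f(d)$.

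The key step is to read off an occurrence of the block $f(x)$ straddling the junction of $f(a)$ and $f(b)$. Because $s$ is a suffix of $f(a)$ and $p$ a prefix of $f(b)$, the word $sp$ is a factor of $f(a)f(b)$ beginning at position $L-s_0+1$; and inside $sp=\sigma f(x)\pi$ the block $f(x)$ begins after $\sigma$, i.e.\ at relative position $\sigma_0+1$. Composing the two offsets places $f(x)$ at positions $j+1,\dots,j+L$ of $f(a)f(b)$, and the bound $1\le j\le L-1$ guarantees this occurrence genuinely overlaps both blocks. Hence I can factor $f(a)=\alpha'\mu$, $f(x)=\mu\nu$, $f(b)=\nu\beta'$, where $\mu$ is the suffix of $f(a)$ of length $L-j\ge 1$ and $\nu$ the prefix of $f(b)$ of length $j\ge 1$; all four pieces $\alpha',\mu,\nu,\beta'$ are non-empty.

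From this factorisation two squares appear for free: $f(ax)=f(a)f(x)=\alpha'\mu\mu\nu$ contains the square $\mu\mu$, and $f(xb)=f(x)f(b)=\mu\nu\nu\beta'$ contains the square $\nu\nu$. If $a\neq x$ then $ax$ is square-free while $f(ax)$ is not, so $f$ is not square-free; likewise if $x\neq b$ we use $xb$. The only remaining possibility is $a=x=b$, and this is the subtle case (and where the earlier lemma is needed). Writing $W:=f(a)=f(x)=f(b)$, the factorisation reads $WW=\alpha' f(x)\beta'=\alpha' W\beta'$ with $\alpha',\beta'$ non-empty, that is, $W$ is an internal factor of $WW$. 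Lemma~\ref{factint} then gives a non-empty word $t$ and integers $m,n\ge 1$ with $W=t^{m+n}$; since $m+n\ge 2$, the single-letter (hence square-free) word $a$ has image $f(a)=W$ containing $t^2$, so again $f$ is not square-free. The main obstacle is precisely this degenerate case $a=x=b$, in which no two-letter square-free word is available, so one must instead recognise the internal-factor configuration and appeal to Lemma~\ref{factint}.
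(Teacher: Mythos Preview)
Your proof is correct and follows essentially the same route as the paper's: both extract from $sp=\sigma f(x)\pi$ a non-empty word that is simultaneously a suffix of $f(a)$ and a prefix of $f(x)$ (your $\mu$, the paper's $s'$) and a non-empty word that is a suffix of $f(x)$ and a prefix of $f(b)$ (your $\nu$, the paper's $p'$), then use the same case split on whether $a=x=b$, invoking Lemma~\ref{factint} in the degenerate case. Your observation that the hypotheses on $c$ and $d$ are never used is also correct.
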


\begin{proof}

Since $|sp|>|\sigma f(x)|$ and $|sp|>|f(x)\pi|$,
we get $|s|> |\sigma|$ and $|p|>|\pi|$.

Let $s'$ be the non empty prefix of $f(x)$ such that $s= \sigma s'$
and
let $p'$ be the non empty suffix of $f(x)$ such that $p= p' \pi$.

If $x=a$ and $x=b$, then $f(x)$ is an internal factor of $f(xx)$.
By Lemma~\ref{factint}, $f(x)$ is not primitive, i.e.,
$f$ is not a square-free morphism.

Therefore, at least one of the word $ax$ or $bx$ is not a square. 
But $f(ax)$ contains the square $s's'$ 
and $f(xb)$ contains the square $p'p'$, i.e.,
 $f$ is not a square-free morphism.
\end{proof}

\begin{proposition}\label{pro1} 

Let $A$ and $B$ be two alphabets with $\card A \geq 3$
and let $f$ be a $L$-uniform morphism from $A^*$ to $B^*$.
If $L$ is an even number then $f$ is not a $3$-anti-power morphism.

\end{proposition}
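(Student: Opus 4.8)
The plan is to show directly that $f$ fails to preserve a single explicit short $3$-anti-power word, namely $w = abcab$ for three pairwise distinct letters $a,b,c$ (available since $\card A \geq 3$). As already recalled, $abcab$ is a $3$-anti-power word, so if $f$ were a $3$-anti-power morphism then $f(abcab)$ would have to be a $3$-anti-power word; I will instead exhibit a factor of $f(abcab)$ that is not an anti-power sequence.

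The starting observation is that $abcab$ is a $\frac{5}{3}$-power. Writing $X = f(a)f(b)$ (of length $2L$) and $Y = f(c)$ (of length $L$), one has $f(abcab) = f(a)f(b)f(c)f(a)f(b) = XYX$, a word of length $5L$ in which the prefix $X$ reappears starting at position $3L+1$; equivalently, the letters of $f(abcab)$ at positions $i$ and $i+3L$ coincide for every $1 \le i \le 2L$.

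The key step is to turn this $\frac{5}{3}$-power into a genuine anti-power violation by choosing the block length $\ell := \frac{3L}{2}$, and this is exactly where the hypothesis that $L$ is even is used, since it guarantees $\ell \in \N$. I will consider the prefix $u$ of $f(abcab)$ of length $3\ell = \frac{9L}{2}$ and split it into its three consecutive $\ell$-blocks $u = u_1u_2u_3$. Then $u_1$ is the prefix of $X$ of length $\ell$ (legitimate because $\ell \le 2L$), while $u_3$ occupies positions $3L+1,\dots,3L+\ell$; by the period-$3L$ repetition above, $u_3$ equals the prefix of the second copy of $X$ of length $\ell$, which is again the prefix of $X$ of length $\ell$. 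Hence $u_1 = u_3$, so $u$ is not a $(3,\ell)$-anti-power sequence, regardless of $u_2$.

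Finally I will check the bookkeeping needed for $u$ to fall under the definition of a $3$-anti-power word, namely $1 \le \ell \le \left\lfloor \frac{|f(abcab)|}{3} \right\rfloor = \left\lfloor \frac{5L}{3} \right\rfloor$; this holds because $\frac{3L}{2} \le \frac{5L}{3}$ and $\frac{3L}{2}$ is an integer. Since the length-$3\ell$ factor $u$ violates the $(3,\ell)$-anti-power condition, $f(abcab)$ is not a $3$-anti-power word, and therefore $f$ is not a $3$-anti-power morphism. The only delicate point is the choice $\ell = \frac{3L}{2}$ together with the two length inequalities ($\ell \le 2L$, so that $u_1$ lies inside the first copy of $X$, and $3\ell \le 5L$, so that $u$ is an actual factor); everything else is immediate once the $XYX$ structure is written down.
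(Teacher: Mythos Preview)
Your proof is correct and follows essentially the same approach as the paper: both use the $3$-anti-power word $abcab$, both exploit that $L$ even allows the block length $\ell = 3L/2$, and both exhibit a length-$9L/2$ factor of $f(abcab)$ whose first and third $\ell$-blocks coincide. The only cosmetic difference is that the paper splits $f(a)=A_1A_2$ and takes the suffix $[A_2 f(b)]\,[f(c)A_1]\,[A_2 f(b)]$, whereas you take the prefix $[f(a)B_1]\,[B_2 f(c)]\,[f(a)B_1]$ (with $f(b)=B_1B_2$); your $XYX$/period-$3L$ framing and the explicit check that $\ell \le \lfloor 5L/3 \rfloor$ make the same argument a bit more transparent.
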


\begin{proof}

Let $a$, $b$ and $c$ be three different letters in $A$.
Let $A_1$ and $A_2$ be the words such that $f(a)=A_1A_2$ with $|A_1|=|A_2|$.
Then $f(abcab)$
contains the non-$3$-anti-power sequence $[A_2f(b)] \, f(c) A_1 \, [A_2f(b)]$ with $abcab$ a 
$3$-anti-power word.
\end{proof}

\vsd
A morphism $f$ on $A$ is \textit{$k$-anti-power up to $\ell$} ($k,\ell \geq 2$) if and only
if $f(w)$ is a $k$-anti-power word for every $k$-anti-power word $w$ over $A$
of length at most $\ell$.

\begin{proposition}\label{mainresult} 
Let $A$ and $B$ be two alphabets with $\card A \geq 3$
and let $f$ be a square-free $L$-uniform morphism from $A^*$ to $B^*$.

Then $f$ is a $3$-anti-power morphism if and only if
it is a $3$-anti-power morphism up to $5$.

\end{proposition}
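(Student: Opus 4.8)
The forward implication is immediate: if $f$ maps every $3$-anti-power word to a $3$-anti-power word, then in particular it does so for those of length at most $5$. For the converse I would argue by contradiction, assuming $f$ is $3$-anti-power up to $5$ but not $3$-anti-power, and I would fix a $3$-anti-power word $w$ of minimal length such that $f(w)$ is not a $3$-anti-power word; the aim is to show $|w|\le 5$, which contradicts the hypothesis. Write $w=w_1\cdots w_n$ with $w_i\in A$, so that $f(w)=f(w_1)\cdots f(w_n)$ is a concatenation of $n$ blocks of length $L$. Since $f$ is square-free and $w$ is in particular square-free, $f(w)$ is square-free, so the only way $f(w)$ can fail to be a $3$-anti-power word is by containing a factor $XYX$ with $|X|=|Y|=\ell\ge 1$; the remaining obstructions (two equal words among three consecutive blocks of the forms $u_1=u_2$ or $u_2=u_3$) are squares and hence excluded. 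Because every factor of a $3$-anti-power word is again a $3$-anti-power word, minimality of $w$ forces this occurrence of $XYX$ to meet both the first block $f(w_1)$ and the last block $f(w_n)$: otherwise $XYX$ would already occur in the image of a proper prefix or suffix of $w$, which is a shorter $3$-anti-power word.

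The argument then splits on the size of $\ell$ relative to $L$. If $\ell<L$, then $XYX$ has length $3\ell<3L$ and, starting inside the first block, it ends before position $4L$; since it must reach block $n$, this already gives $n\le 4$. The substantial case is $\ell\ge L$, where $X$ is long enough to contain a full block and synchronization becomes the issue. Here I would invoke Proposition~\ref{pro2}: as $f$ is square-free, its contrapositive forbids any full block $f(x)$ from occurring as an internal factor straddling a block boundary with a non-empty overhang on both sides. Comparing the block-framings of the two equal copies of $X$, a genuine block lying inside the first copy must reappear, at the corresponding relative position inside the second copy, shifted by $2\ell \bmod L$; were this shift non-zero it would produce exactly such a forbidden straddling occurrence. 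Hence $2\ell\equiv 0 \pmod L$ and the two copies of $X$ are block-aligned, starting at a common offset $o$ within their respective blocks.

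With alignment established I would desubstitute. Writing $X=s_1f(u)p_1$ and $X=s_2f(u')p_2$ for the two copies, where $s_i$ is a suffix of a block, $p_i$ a prefix of a block, and $f(u),f(u')$ the full-block interiors, the equality of the two copies together with injectivity of $f$ (a square-free morphism is a ps-morphism, hence bifix, so Lemmas~\ref{pref} and~\ref{suff} apply, and Lemma~\ref{factint} controls the boundary overlaps) yields $u=u'$ and matching partial blocks. Propagating this through the whole pattern $XYX$, the letters of $w$ covered by the three aligned segments are so constrained that the bad configuration is already witnessed by a short factor of $w$. Using the left–right symmetry of the statement (replacing $f$ by the reversal morphism $a\mapsto f(a)^{R}$, which is again square-free, uniform, and $3$-anti-power up to $5$) I would normalise so that the pattern begins in the first half of its initial block, i.e. $o\le \lfloor L/2\rfloor$. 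Writing $2\ell=NL$, a short case analysis then bounds the number of blocks met by $XYX$: the case $N=2$ (that is, $\ell=L$) again gives $n\le 4$; the case $N=3$, which is exactly the configuration arising in the proof of Proposition~\ref{pro1} (with $X$ a half-block followed by a full block), gives $n\le 5$; and $N\ge 4$ is disposed of by exhibiting a shorter witness. In every case $|w|\le 5$, contradicting that $f$ is $3$-anti-power up to $5$.

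The routine parts are the forward direction, the reduction of failure to an $XYX$ factor, and the short-$\ell$ span estimate. The main obstacle is the synchronization step when $\ell\ge L$: converting the informal statement ``misalignment produces a straddling full block'' into a rigorous application of Proposition~\ref{pro2}, and then controlling the two partial end-blocks of the pattern together with the offset $o$ so that the number of blocks met is provably at most $5$. This is precisely where the interplay between the bifix (ps-morphism) structure, Lemma~\ref{factint}, and the reversal symmetry must be handled with care; it is also what pins the bound at $5$ rather than $4$, since the $N=3$ configuration underlying Proposition~\ref{pro1} is genuinely realised.
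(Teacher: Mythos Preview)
Your outline matches the paper's architecture: reduce to a factor $XYX$ (since $f$ is square-free), use Proposition~\ref{pro2} to force the two copies of $X$ to be block-aligned, and then argue that a short witness exists. In the paper's language your synchronization $2\ell\equiv 0\pmod L$ is exactly the condition $|s_1|=|s_3|$, and your appeal to Proposition~\ref{pro2} for the misaligned case is the paper's Case~3. So the skeleton is right.

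The genuine gap is the step you label ``$N\ge 4$ is disposed of by exhibiting a shorter witness.'' This is not a routine bound on the number of blocks (for $N\ge 4$ the span of $XYX$ exceeds $5L$), and you do not say what the witness is. In the paper this is the entire content of Case~1: from $|s_1|=|s_3|$ and $L$ odd one first deduces $|x|=|y|$ (parity of $L$ is used here, and you never invoke it), then Lemma~\ref{pref} gives $x=z$ and $p_2=p_4$, then a chain of length equalities forces $|p_1|=|p_2|=|p_3|$ and $|s_2|=|s_3|=|s_4|$, and finally a case analysis on coincidences among $a_1,a_2,a_3,a_4$ (each handled by producing an explicit square in the image of a square-free $3$-letter word) shows that $a_1a_2a_3a_4$ is a $3$-anti-power word whose image contains $s_1p_2\,s_2p_3\,s_1p_2$. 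None of this is in your sketch, and it is not automatic: the desubstitution you describe yields $u=u'$ for the interior blocks, but the contradiction comes from the four boundary letters, and pinning those down needs square-freeness in a way you have not used. Relatedly, the paper disposes of $L$ even at the outset via $abcab$; your attempt to fold that into the $N=3$ case leaves the odd values $N\ge 5$ (which occur only when $L$ is even) inside your unproven ``$N\ge 4$'' bucket, where the length-$4$ witness argument above does not directly apply since $|x|\ne|y|$ there.
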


\begin{proof}

If $L$ is an even number, the image of the word $abcab$ 
of length $5$ shows that $f$ is not a 3-anti-power morphism
(see the proof of Proposition~\ref{pro1}). It ends the proof.
So we may assume that $L$ is odd.

By definition of $3$-anti-power
morphisms, we only have to prove the "if" part of Proposition~\ref{mainresult}.

By Lemma~\ref{lemmeSPKPS}, $f$ (square-free) is a ps-morphism and so injective.

By contradiction, we assume that a shortest $3$-anti-power word 
$w$ (not necessarily unique) such that $f(w)$
contains a non-$3$-anti-power satisfies $|w| \geq 6$.
We will show that this assumption leads to contradictions.

Since the length of $w$ is minimal, we may assume that there exist five words
$p$, $s$, $U_1$, $U_2$ and $U_3$ such that $f(w)=p U_1U_2U_3 s$
where $p$ is a prefix of $f(\sub{w}{1})$ different from
$f(\sub{w}{1})$ and $s$ is a suffix of $f(\sub{w}{|w|})$
different from $f(\sub{w}{|w|})$. Moreover, the words $U_1$, $U_2$, $U_3$
have the same length $\Lambda (\geq 1)$ and two of them are equal.

If $U_1=U_2$ or if $U_2=U_3$, it means that $f(w)$ contains a square with
$w$ a $3$-anti-power word so a square-free word. 
That is $f$ is not a square-free morphisms: a contradiction with the definition of $f$.
The only remaining case is $U_1=U_3$.
To simplify notations, we denote by $U$ the words $U_1$ and $U_3$
and by $V$ the word $U_2$.

Let $i_2$ be the shortest integer such that $p U$
is the prefix of $f(\sub{w}{1..i_2})$
and let $i_3$ be the shortest integer such that $p UV$
is the prefix of $f(\sub{w}{1..i_3})$.

If $i_2=1$ then $\Lambda \leq |pU|  \leq |f(\sub{w}{1})|=L$ and $|p UVU s| < 4L$.
This is impossible since $|f(w)| \geq 6L$.

On a the same way, by a length criterion, 
the cases $i_2=i_3$ and $i_3=|w|$ are impossible.

If we denote $x=\sub{w}{2..i_2-1}$,
$y=\sub{w}{i_2+1..i_3-1}$, $z=\sub{w}{i_3+1..|w|-1}$,
$a_1=\sub{w}{1}$, $a_2=\sub{w}{i_2}$, $a_3=\sub{w}{i_3}$
and $a_4=\sub{w}{|w|}$ then $w=a_1xa_2ya_3za_4$
with $|w|= 4 + |x| +|y| + |z|$. It implies that
$|x| +|y| + |z| \geq 2$.

Moreover, there exists some words $p_i$ and $s_i$ ($1 \leq i \leq 4)$
such that $f(a_i)=p_is_i$ with
$p_1=p$, $s_4=s$.
By definition, the words $s_1$, $p_2$, $p_3$ and $p_4$ are non empty.

In other words, we have $U=s_1f(x)p_2=s_3f(z)p_4$ and $V=s_2f(y)p_3$.

Let $x_{j_1}$ and $x_{j_2}$ be two different words of $\{x,y,z\}$, we have
$||x_{j_1}|-|x_{j_2}|| \leq 1$. 
Indeed, in the contrary, for instance, if $|x_{j_1}| \geq |x_{j_2}| + 2$,
we get that both
$\Lambda > |f(x_{j_1})| \geq |f(x_{j_2})| + 2 L$
and $\Lambda  < |f(x_{j_2})| + 2 L$:
this is impossible.
Since $|x| +|y| + |z| \geq 2$, it also implies that at least
two of the words $x,y$ and $z$ are non empty.


$\bullet$ {\textit{Case 1 : $|s_1|=|s_3|$}}

Since $2 \times |U|=|U|+|V|=|s_1f(x)p_2|+|s_2f(y)p_3|=(|x|+|y|+2) \times L$,
we get that $|x|+|y|$ is even, i.e., $|x|=|y|$.

From the equality $s_1f(x)p_2=s_3f(z)p_4(=U)$, we get $s_3=s_1 (\neq \varepsilon)$.
By Lemma~\ref{pref}, it also implies $z=x$ and $p_2=p_4 (\neq \varepsilon)$.

In particular, since $|x|=|y|$ and since $w=a_1xa_2ya_3xa_4$ is a $3$-anti-power word, 
we have $a_1 \neq a_3$ and $a_2 \neq a_4$.

Since $|U|=|V|$, we get $|s_1|+|p_2|=|s_2|+|p_3|$.
Since $|s_1|=|s_3|$, we get $|p_1|=|p_3|$ and
$2|p_1|=L+|p_1|-|s_1|=L+|p_3|-|s_1|=L+|p_2|-|s_2|=2|p_2|$, i.e., 
$|p_1|=|p_2|$.

In a same way, since $|p_2|=|p_4|$, we get $|s_2|=|s_4|=|s_3|$.

If $a_1=a_2$ then $p_1=p_2(=p_4)$, $f(a_1)=p_1s_1$, $f(a_3)=p_3s_3=p_3s_1$
and $f(a_4)=p_4s_4=p_1s_4$.
It means that $f(a_3 a_1 a_4)$ contains $(s_1p_1)^2$
with $a_3 a_1 a_4$ square-free since $a_1 \neq a_3$ and 
$a_1=a_2 \neq a_4$: a contradiction with the hypothesis that $f$ is a
square-free morphism.

In the same way, if $a_3=a_4$, we get that 
$f(a_1 a_4 a_2)$ contains $(s_4p_4)^2$
with $a_1 a_4 a_2$ square-free.
And, if $a_2=a_3$, we get that
$f(a_1 a_2 a_4)$ contains $(s_2p_2)^2$
with $a_1 a_2 a_4$ square-free.
In theses both cases, we again get a contradiction 
with the hypothesis that $f$ is a square-free morphism.

Thus $a_1$, $a_2$ and $a_3$ are three different letters
and $a_2$, $a_3$ and $a_4$ are also three different letters.
It means that $a_1a_2a_3a_4$ is a $3$-anti-power word.
But $f(a_1a_2a_3a_4)$
contains the non-$3$-anti-power sequence
$s_1 p_2 \, s_2  p_3 \, s_3 p_4=s_1 p_2 \, s_2  p_3 \, s_1 p_2$:
a contradiction with the minimality of $|w|$.

$\bullet$ {\textit{Case 2 : $s_1=f(a_1)$ and $s_3=\varepsilon$}}

We have $|f(a_1)f(x)p_2|=|s_1f(x)p_2|=|U|=|V|=|s_2f(y)p_3|=|s_2f(y)f(a_3)|$, i.e.,
 $|p_2|=|s_2|$: this contradicts the fact that $L$ is odd.

$\bullet$ {\textit{Case 3 : $s_1 \neq s_3$}}

%
Since, at least two of the words $x,y$ and $z$ are non empty, we have
$x \neq \varepsilon$ or $z \neq \varepsilon$.

Since the equality $s_1f(x)p_2=s_3f(z)p_4$ is symetric, without loss of generality, we may assume
that $|s_1|<|s_3|$.
In this case, we necessarily have $x \neq \varepsilon$. Let $\chi$ be the first
letter of $x$ and let $x'$ be the word such that $x=\chi x'$.
If $z=\varepsilon$, let $P=p_4$ and if $z \neq \varepsilon$
let $P=f(\gamma)$ where $\gamma$ is the first letter of $z$.
In particular, we have $P$ non-empty.
Let $\pi$ be the (non empty) prefix of $f(x')p_2$ such that 
$s_1f(\chi)\pi=s_3P$.
By proposition~\ref{pro2}, this last equation implies that 
$f$ is not a square-free morphism: a final contradiction.
\end{proof}

\section{An example}

As stated in the first section, Dejean's words are anti-power words. But we can 
build $3$-power-free words without using fractionnal powers.

According to my computer, the following morphism $h$ is a $3$-anti-power morphism
(but I do not really trust my programming skills).

\begin{center}
$\begin{array}{ll}
h: & \{a;b;c;d;e\}^* \rightarrow \{a;b;c;d;e\}^* \\
& a \mapsto abceacd \\
& b \mapsto abecaed \\
& c \mapsto acbaecd\\
& d \mapsto acbeabd\\
& e \mapsto acebced\\
\end{array}$
\end{center}

The word $h^{\omega}(a)= \lim_{n \rightarrow +\infty} h^n(a)= abceacd \, abecaed \, acbaecd \, acebced \, abceacd \, acbaecd \, acbeabd \, $ $abceacd \, abecaed \, acebced \, acbaecd \, abceacd \, ...$ generated by $h$ is thus an infinite $3$-anti-power word.

Let us remark that $h^{\omega}(abcac)$ is also a  $3$-anti-power word.
But it contains an infinite number of factors that are $\dfrac{5}{3}$-powers.

\vsd

\textbf{Acknowledgment.}

I would like to thank James D. Currie for his comments on the first version of this paper.

\bibliography{bf.bib}

\end{document}